\documentclass[12pt]{article}

\usepackage{amsmath,amssymb,amsthm}
\usepackage{bm}
\usepackage{youngtab}
\usepackage{graphicx}

%
\makeatletter

\@addtoreset{equation}{section}
\makeatother

%

\addtolength{\textwidth}{3cm}
\addtolength{\oddsidemargin}{-1.5cm}
\addtolength{\evensidemargin}{-1.5cm}
\addtolength{\textheight}{3.0cm}
\addtolength{\topmargin}{-1.5cm}

\newcommand{\beq}{\begin{equation}}
\newcommand{\eeq}{\end{equation}}

%
%

\newcommand{\bC}{\ensuremath{\mathbb{C}}}

\newcommand{\bP}{\ensuremath{\mathbb{P}}}

\newcommand{\bR}{\ensuremath{\mathbb{R}}}

\newcommand{\bT}{\ensuremath{\mathbb{T}}}

\newcommand{\bZ}{\ensuremath{\mathbb{Z}}}

%
%

\newcommand{\scC}{\ensuremath{\mathcal{C}}}

\newcommand{\scN}{\ensuremath{\mathcal{N}}}
\newcommand{\scO}{\ensuremath{\mathcal{O}}}

\newcommand{\scV}{\ensuremath{\mathcal{V}}}

\newcommand{\scZ}{\ensuremath{\mathcal{Z}}}


\newtheorem{thm}{Theorem}[section]

\newtheorem{prop}[thm]{Proposition}

\newtheorem{lem}[thm]{Lemma}



\newcommand{\mr}[1]{{\mathrm{#1}}}

\newcommand{\bb}[1]{{\mathbb{#1}}}
\newcommand{\mca}[1]{{\mathcal{#1}}}

\newcommand{\Z}{\bb{Z}}
\newcommand{\Zh}{\bb{Z}_{\mathrm{h}}}

\newcommand{\V}{\mca{V}}

\newcommand{\pg}{\overset{+}{\succ}}

\newcommand{\mg}{\overset{-}{\succ}}

\newcommand{\tenchi}{{}^{\mathrm{t}}}

\newcommand{\Vmin}{\V_{\mathrm{min}}}

\newcommand{\type}{(\sigma,\theta\sss;\sss \mu,\nu)}
\newcommand{\sss}{\hspace{0.5pt}}

\usepackage{color}
\usepackage{version}

\newenvironment{NB}{
\color{red}{\bf NB}. \footnotesize
}{}

\newenvironment{NB2}{
\color{blue}{\bf NB}. \footnotesize
}{}

\excludeversion{NB}
\excludeversion{NB2}

\begin{document}

\baselineskip=18pt  
\baselineskip 0.7cm
 
\begin{titlepage}

\setcounter{page}{0}

\renewcommand{\thefootnote}{\fnsymbol{footnote}}

\begin{flushright}
CALT-68-2755\\
IPMU09-0132\\
UT-09-24
\end{flushright}
 
\vskip 1cm

\begin{center}
{\LARGE \bf
The Non-commutative Topological Vertex \\ 
\medskip
and \\ \medskip
Wall Crossing Phenomena
}
 
\vskip 1.5cm 
 
{\large
Kentaro Nagao$^1$ and 
Masahito Yamazaki$^{2,3,4}$
}

\vskip 0.8cm

{
\it
$^1$ RIMS, Kyoto University, Kyoto 606-8502, Japan

$^{2}$  California Institute of Technology, 
CA 91125, USA\\

$^3$Department of Physics, University of Tokyo, 
Tokyo 113-0033,
Japan\\

$^4$IPMU,
University of Tokyo, 
Chiba 277-8586, 
Japan\\
}
 
\end{center}

\vspace{1cm}

\centerline{{\bf Abstract}}
\medskip
\noindent

We propose a generalization of the topological vertex, which we call  
the ``non-commutative topological vertex''. This gives open BPS invariants for a
\begin{NB}removed: an arbitrary\end{NB}%
toric Calabi-Yau manifold without compact 4-cycles, where we have D0/D2/D6-branes wrapping holomorphic 0/2/6-cycles, as well as D2-branes wrapping disks whose boundaries are on D4-branes wrapping non-compact Lagrangian 3-cycles.
The vertex is defined combinatorially using the crystal melting model proposed 
recently, and depends on the value of closed string moduli at infinity.
The vertex in one special chamber gives the same answer as that computed by the ordinary topological vertex.
We prove an identify expressing the non-commutative topological vertex of a toric Calabi-Yau manifold $X$ as a specialization of the closed BPS partition function of an orbifold of $X$, thus giving a closed expression for our vertex.
We also clarify the action of the Weyl group of an affine $A_L$
\begin{NB}corrected\end{NB}%
Lie algebra on chambers, and comment on the generalization of our results to the case of refined BPS invariants.
\end{titlepage}

\setcounter{page}{1} 


\section{Introduction and Summary}

Recently, there has been significant progress in the counting problem of BPS states in type IIA string theory on a toric Calabi-Yau 3-fold\footnote{See \cite{JM,CJ,OY1,OY2,DG,AOVY,CP}. See also \cite{Szendroi,Young1,Young2,MR,NN,Nagao1,Nagao2} for mathematical discussions}. 
In the literature, the Calabi-Yau manifold (which we denote by $X$) is assumed to have no compact 4-cycles, and we consider a BPS configuration of D0/D2-branes wrapping compact holomorphic 0/2-cycles, as well as a single D6-brane filling the entire Calabi-Yau manifold. The question is to count the degeneracy of such BPS bound states of D-branes.

One subtlety in this counting problem is the wall crossing phenomena, stating that the degeneracy of BPS bound states depends on the value of moduli at infinity. Indeed, the closed BPS partition function\footnote{The upper index $c$ stands for `closed'.}
$$
\scZ^c_{\rm BPS, (\sigma',\theta')},
$$
which is defined in \cite{OY1} as the generation function of the degeneracy of D-brane BPS bound states\footnote{The definition of the partition function $\scZ_{\rm BPS}$ is the same as the partition function $\scZ_{\rm BH}$ in \cite{OSV}.},
depends on maps $\sigma',\theta'$ specifying a chamber in the K\"ahler moduli space\footnote{See Appendix \ref{app.Weyl} for details.}.
What is interesting is that in one special chamber $\tilde{C}_{\rm top}$ of the K\"ahler moduli space, the BPS partition function is equivalent the topological string partition function\footnote{Actually, the topological string partition function depends on the choice of the resolution of the singular Calabi-Yau manifold $X$. This is related to the choice of the limit, as will be explained in the main text.} (up to the change of variables, which we do not explicitly show here for simplicity):
\beq
\scZ^c_{\rm BPS} \Big|_{\tilde{C}_{\rm top}} =\scZ^c_{\rm top} \label{ZtoZtop}.
\eeq

It is natural to expect that similar story should exist for open BPS invariants as well. Namely, we expect to define
open version of the BPS partition function\footnote{The upper index $o$ stands for `open'.}
$$
\scZ^o_{\textrm{BPS},(\sigma,\theta)}
$$
depending on maps $\sigma, \theta$ specifying the chamber in the K\"ahler moduli space, such that the partition function reduces to the open topological string partition function in a special chamber $C_{\rm top}$:
\beq
\scZ^o_{\textrm{BPS}} \Big| _{C_{\rm top}}=\scZ^o_{\rm top}.
\eeq
The question is how to define open BPS degeneracies such that the generating function follows the conditions above.

As a guiding principle of our following argument, we use the crystal melting model developed recently in \cite{OY1} (see \cite{Szendroi,MR} for mathematical discussions). 
This crystal melting model generalizes the result of \cite{ORV} for $\bC^3$ to an arbitrary toric Calabi-Yau manifold. In the case of $\bC^3$, the crystal melting partition function with the boundary conditions specified by three Young diagrams $\lambda_1,\lambda_2,\lambda_3$ gives 
the topological vertex \cite{AKMV} $C_{\lambda_1,\lambda_2,\lambda_3}$. By using these vertices
as a basic building block, we can compute open topological string partition function with non-compact D-branes wrapping Lagrangian 3-cycles of the topology $\bR^2\times S^1$ included \cite{AV}.
In this story, generalization from closed to open topological string partition function corresponds to the change of the boundary condition of the crystal melting model for $\bC^3$.

Now the recent result \cite{OY1} shows that the closed BPS partition function  discussed above can be written as a statistical mechanical partition function of the crystal model. This model applies to any toric Calabi-Yau manifold, and for $\bC^3$ the BPS partition function coincides with the topological string partition function. Similarly to the case of the topological string story mentioned in the previous paragraph, we hope to define the open version of the BPS invariants by changing the boundary condition of the crystal melting model. The invariants defined in this way will be defined in any chamber in the K\"ahler moduli space, and reduces to the ordinary topological vertex in a special chamber. We call such a generalization of the topological vertex ``the non-commutative topological vertex''\footnote{The word `non-commutative' stems from the mathematical terminologies such as ``non-commutative crepant resolution'' \cite{VdB} and ``non-commutative Donaldson-Thomas invariant'' \cite{Szendroi}. The non-commutativity here refers to that of the path algebra of the quiver. The quiver (together with a superpotential) determines a quiver quantum mechanics, which is the low-energy effective theory on the D-brane worldvolume \cite{OY1}. }, \begin{NB}added\end{NB}%
following ``the orbifold topological vertex'' named in \cite{BCY2}.\begin{NB}I removed the footnote: {See \cite{BCY2} for a related proposal. {\bf Q. Our definition include theirs?}}.\end{NB}%

We will see that this expectation is indeed true. We adopt the
\begin{NB}Removed: combinatorial\end{NB}%
definition proposed by one of the authors in the mathematical literature \cite{Nagao2,Nagao3}. Our non-commutative topological vertex is defined for a Calabi-Yau manifold $X$ without compact 4-cycles, and a set of representations $\lambda$ assigned to external legs of the toric diagram. As in the case of topological vertex, $\lambda$ encodes the boundary condition of the D4-branes wrapping Lagrangian 3-cycles. We propose our vertex as the building block of open BPS invariants. Here by an open BPS invariant we mean a degeneracy counting the number of BPS bound states of D0/D2/D6-branes wrapping holomorphic 0/2/6-cycles, as well as D2-branes wrapping disks whose boundaries are on D4-branes wrapping non-compact Lagrangian 3-cycles.

We can provide several consistency checks of our proposal (see section \ref{sec.check} for more details).
First, our vertex by definition reduces to the closed BPS invariant when all the representations $\lambda$ are trivial. Second, our vertex shows a wall crossing phenomena as we change the closed string K\"ahler moduli, and the vertex coincides with the topological vertex computation in the chamber where the closed BPS partition function reduces to the closed topological string partition function.
Third, the wall crossing factor is independent of the boundary conditions on D-branes, and is therefore the partition function factorizes into the closed string contribution and the open string contribution, as expected from \cite{OVknot} and the generalization of \cite{AOVY}.

Given a combinatorial definition of the new vertex, the next question is whether we can compute it, writing it in a closed expression. We show that the answer is affirmative, by showing the following statement.
For a Calabi-Yau manifold $X$, the non-commutative topological vertex $\scC_{\textrm{BPS},(\sigma,\theta;\lambda)}(X)$ is equivalent to the closed BPS partition function $\scZ^c_{\textrm{BPS},(\sigma',\theta')}(X')$ for an orbifold $X'$ of $X$
, under a suitable identification of variables explained in the main text\footnote{More precisely, we need to specify the resolution of $X$ and $X'$. We also need to impose the condition that two of the representations $\lambda$ are trivial. See the discussions in the main text.}:
\beq
\scC_{\textrm{BPS},(\sigma,\theta;\lambda)}(X) =\scZ^c_{\textrm{BPS},(\sigma',\theta')}(X').
\label{eq.main}
\eeq
We will give an explicit algorithm to determine $X'$ and $\sigma', \theta'$, starting from the data on the open side.
Since the infinite-product expression for $\scZ^c_{\textrm{BPS},(\sigma',\theta')}(X')$ is already known \cite{Nagao1,AOVY}, this gives a closed infinite-product expression for our vertex.

The organization of this paper is as follows. We begin in section 2 with a brief summary of the closed BPS invariants and their wall crossings, 
and their relation with the topological string theory.
In section 3 we define our new vertex using the crystal melting model. We also perform several consistency checks of our proposal. Section 4 contains our main result \eqref{eq.main}, which shows the 
equivalence of 
our new vertex with a closed BPS partition function under suitable parameter identifications. We give an explicit algorithm for 
constructing closed BPS partition function starting from our vertex.
In Section 5 we treat several examples in order to illustrate our general results. Section 6 is devoted to discussions. We also include Appendices A-C for mathematical proofs and notations. 

\section{Closed BPS Invariants}\label{sec.closed}
Before discussing the open BPS invariants, we summarize in this section the definition and the properties of the closed BPS invariants.

Throughout this paper, we concentrate on the case of 
the so-called generalized conifolds.
The reason for this is that wall crossing phenomena is understood 
well only in cases without compact 4-cycles, which means $X$ is either a generalized conifold or $\bC^3/(\bZ_2\times \bZ_2)$\footnote{See \cite{AOVY} for the proof of this statement.}. \begin{NB}Removed: The case of $\bC^3/(\bZ_2\times \bZ_2)$ is similar.\end{NB}%

By suitable $SL(2,\bZ)$ transformation, we can assume that the toric diagram of a generalized conifold is a trapezoid with height 1, with \begin{NB}corrected\end{NB}%
length $L_+$ edge at the top and $L_-$ at the bottom (see Figure \ref{gctd})\footnote{The Calabi-Yau manifold is determined by $L_+$ and $L_-$ as $xy=z^{L_+} w^{L_-}$.}. If we denote by $L=L_{+}+L_-$ the sum of the length of the edges on the top and the bottom of the trapezoid, this geometry has $L-1$ independent compact $\bP^1$'s. We label them by $\alpha_i$, borrowing the language of the root lattice of $\hat{A}_{L-1}$ algebra.

\begin{figure}[htpb]
\centering{\includegraphics[scale=0.4]{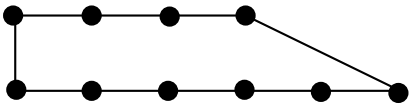}}
\caption{The toric diagram of a generalized conifold, with $L_+=3, L_-=5$.}
\label{gctd}
\end{figure}

The language of the root lattice will be used extensively throughout this paper\footnote{The root lattice of $\hat{A}_{L-1}$ is exploited in \cite{Nagao1,AOVY,Nagao2}. See also Appendix \ref{app.Weyl}.}.
We can also make more $\bP^1$'s by combining them. For example, combining all the $\bP^1$'s between $i$-th and $j$-th $\bP^1$ (assume $i<j$), we have another $\bP^1$ which we denote by 
$$
\alpha_{i,j}:=\alpha_i+\ldots+ \alpha_j. \label{alphaij}
$$
This corresponds to a positive root of $\hat{A}_{L-1}$.

Suppose that we have a Calabi-Yau manifold $X$ without compact 4-cycles.
We also consider a single D6-brane filling the entire $X$ and D0/D2-branes
wrapping compact holomorphic 0/2-cycles specified by $n \in H_0(X;\bZ)$ and $\beta\in H_2(X;\bZ)$, respectively. We can then define the
BPS degeneracy $\Omega(n,\beta)$ counting BPS degeneracy of D-branes\footnote{More precisely, this BPS degeneracy is defined by the second helicity supertrace.}.
The closed BPS partition function is then defined by
\beq
\scZ^c_{\textrm{BPS}}(q,Q)=\sum_{n,\beta}\Omega(n,\beta) q^n Q^{\beta}.
\eeq

The closed BPS partition function for generalized conifolds is studied in \cite{Nagao1,AOVY}.
To describe the results, let us first specify the resolution (crepant resolution\footnote{Crepant resolution is a resolution $f: Y\to X$ such that $\omega_Y=f^* \omega_X$, where $\omega_X$ and $\omega_Y$ are canonical bundles of $X$ and $Y$.}) of $X$\footnote{This is not essential, since by varying the value of the K\"ahler moduli we can go to the geometry with other choices of resolutions. We just need to specify an arbitrary resolution in order to begin the discussion. See Appendix \ref{app.Weyl} for more about this.}.
Each of the $L-1$ $\bP^1$'s are either $\scO(-1,-1)$-curve
or $\scO(-2,0)$-curve.
In the language of the toric diagram, this is to specify the triangulation of the toric diagram. We specify this choice by a map
\beq
\sigma: \{ 1/2,3/2,\ldots, L-1/2 \} \to \{\pm 1 \}.
\eeq
In the following we sometimes write $\pm $ instead of $\pm 1$.
When $\sigma(i-1/2)=1$ ($\sigma(i-1/2)=-1$), the $i$-th triangle from the left has 
one of its edges on the top (bottom) edge of the trapezoid.
This means that the $i$-th $\bP^1$ is a $\scO(-1,-1)$-curve ($\scO(-2,0)$-curve) when $\sigma(i-1/2) = -\sigma(i+1/2)$ ($\sigma(i-1/2) = \sigma(i+1/2)$).
By definition, we have $\big| \sigma^{-1}(\pm 1)\big|=L_{\pm}$. 

For example, in the case of Suspended Pinched Point ($L_+=1, L_-=2$) whose toric diagram is shown in Figure \ref{fig.sigmaex}, $L=3$ and there are 3 difference choice of resolutions. This is represented by
\begin{align}
\sigma_1:\{ 1/2,3/2,5/2\}\to \{-,-,+ \}, \\
\sigma_2:\{ 1/2,3/2,5/2\}\to \{-,+,- \}, \\ 
\sigma_3:\{ 1/2,3/2,5/2\}\to \{+,-,- \}.
\label{SPPsigma}
\end{align}

\begin{figure}[htpb]
\centering{\includegraphics[scale=0.4]{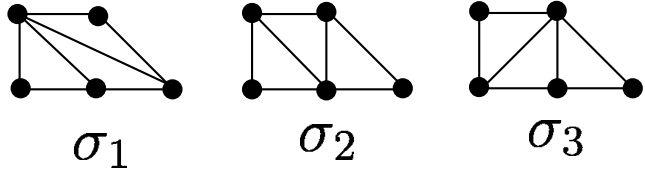}}
\caption{The choice of resolutions of a generalized conifold ($L_+=1,L_-=2$).}\label{fig.sigmaex}
\end{figure}

Given $\sigma$, the topological string partition function is given by \cite{AKMV,IK}
\beq
\scZ_{\textrm{top}, \sigma}(q=e^{-g_s},Q=e^{-t})=\prod_{n=1}^{\infty} (1-q^n Q)^{n N_{\beta}^{g=0}},
\eeq
where $N_{\beta}^{g=0}$ is the genus 0 Gopakumar-Vafa (GV) invariant\footnote{Higher genus GV invariants vanish for generalized conifolds.}. For the 2-cycle $\beta=\alpha_i+\ldots +\alpha_j$, the explicit form of $N_{\beta}^0$ depends on $\sigma$ and is given by
\begin{align*}
N_{\beta=\alpha_i+\ldots \alpha_j}^{g=0}&=(-1)^{1+\sharp \{k\mid i\le k\le j, \ \sigma(k-1/2)\ne \sigma(k+1/2) \}} \\
&=(-1)^{1+\sharp \{k\mid i\le k\le j, \ \alpha_k \textrm{ is a } \scO(-1,-1)-\textrm{curve} \}}.
\end{align*}
By CPT invariance in five dimensions \cite{AOVY}, we have
$
N_{\beta=-(\alpha_i+\ldots \alpha_j)}^{g=0}
=N_{\beta=\alpha_i+\ldots \alpha_j}^{g=0}.
$
We also have $N_{\beta=0}^0=\chi(X)/2$, where the Euler character $\chi(X)$ for a toric Calabi-Yau manifold is the same as twice the area of the toric diagram.

As shown in \cite{AOVY,Nagao1}, the closed BPS partition function is given by
%
\begin{eqnarray}
\scZ_{\rm BPS}(q,Q)&=&\scZ_{\rm top}(q,Q) \scZ_{\rm top}(q,Q^{-1}) \big|_{\rm chamber} \nonumber \\
&=& \prod_{(\beta,n): Z(\beta,n)>0} (1-q^n Q^{\beta})^{n N_{\beta}^0}, \label{Zchamber}
\end{eqnarray}
where the central charge $Z(\beta,n)$ is given by
$$
Z(\beta,n)=(B(\beta)+n)/R. \label{Z}
$$
Here $1/R$ denotes (up to proportionality constants) the central charge of the D0 brane, and following \cite{AOVY} we choose the complexified K\"ahler moduli to be real. Also, the notation $B(\beta)$ means the B-field flux through the cycle $\beta$\footnote{This was written $\beta B$ in \cite{AOVY}.}.

Now suppose that $1/R$ is positive\footnote{Under this condition we are discussing only half of chambers of the K\"ahler moduli space, which lie between the Donaldson-Thomas chamber and the non-commutative Donaldson-Thomas chamber.
 The other half arises when $1/R$ is negative.}.
From \eqref{Zchamber} and \eqref{Z}, it follows that the wall crossing occurs when the integer part of the value of the B-field through the cycle change. For the cycle $\alpha_{i}+\ldots +\alpha_j$, this is given by
$$
\left[ B(\alpha_i)+\ldots + B(\alpha_j) \right], \label{Balphaij}
$$
Since there are \begin{NB}corrected\end{NB}%
$L-1$ $\bP^1$'s in $X$, there are \begin{NB}corrected\end{NB}%
$L(L-1)/2$ such parameters.

We can take a special limit $B(\alpha_i)\to \infty$. Let us denote this special chamber by $\tilde{C}_{\rm top}$. As discussed in \cite{AOVY,Nagao1}, in this limit the BPS partition function reduces to the closed topological string partition function:
$$
\scZ^c_{(\sigma,\theta)}\Big|_{\tilde{C}_{\rm top}} =\scZ^c_{\rm top},
$$
just as advertised in \eqref{ZtoZtop}.

\medskip
For concreteness, let us discuss an example.
We use the example of the Suspended Pinched Point ($N=3$) using the triangulation $\sigma_1$ in \eqref{SPPsigma}. In this example, the topological string partition function is 
$$
\scZ_{\textrm{top}, \sigma=\sigma_1}(q,Q)=M(q)^{3/2} \prod_{n=1}^{\infty} (1-q^n Q_1)^{-n}
\prod_{n=1}^{\infty} (1-q^n Q_2)^{n} \prod_{n=1}^{\infty} (1-q^n Q_1 Q_2)^{n},
$$
where $M(q)$ is the MacMahon function
$$
M(q)=\prod_{n=1}^{\infty} (1-q^n)^{-n}.
$$
The BPS partition function is given by
\begin{align*}
\scZ_{\rm BPS}&(q,Q)=M(q)^{3} 
\prod_{n=1}^{\infty} (1-q^n Q_1)^{-n}
\prod_{n=1}^{\infty} (1-q^n Q_2)^{n} \prod_{n=1}^{\infty} (1-q^n Q_1 Q_2)^{n} \\%
&\times 
\prod_{n>\left[ B(\alpha_1) \right]}^{\infty} (1-q^n Q_1^{-1})^{-n}
\prod_{n>\left[ B(\alpha_2) \right]}^{\infty} (1-q^n Q_2^{-1})^{n} \prod_{n>\left[ B(\alpha_1+\alpha_2) \right]}^{\infty} (1-q^n (Q_1 Q_2)^{-1})^{n}.
\end{align*}


\medskip

The parameters $\left[ B(\alpha_{i}+\ldots +\alpha_j) \right]$ specify the chamber, but as we can see from the definition they are not completely independent parameters. Since we only have \begin{NB}corrected\end{NB}%
$L-1$ real parameters $B_i$, it is likely that this parametrization is redundant. Indeed, as explained in the Appendix \ref{app.Weyl} 
we can specify the chamber by a map $\theta$, which is specified by \begin{NB}corrected\end{NB}%
$L$ half-integers, \begin{NB}corrected\end{NB}%
$\theta(1/2),\theta(3/2),\ldots, \theta(L-1/2)$, satisfying one constraint
$$
\sum_{i=1}^L \theta\left(i-\frac{1}{2}\right)=\sum_{i=1}^L \left(i-\frac{1}{2}\right).
$$
This means we can indeed parametrize the chamber by \begin{NB}corrected\end{NB}%
$L-1$ independent (half-)integers, which is what we expected. As discussed in Appendix \ref{app.Weyl}, $\theta$ is an element of the Weyl group of \begin{NB}corrected\end{NB}%
$\hat{A}_{L-1}$.

\section{The Noncommutative Topological Vertex}\label{sec.def}

In this section we give a general definition of the non-commutative topological vertex using the crystal melting model. This definition is equivalent to the one given in \cite{Nagao2} using the dimer model\footnote{See Appendix A of \cite{OY1} for the equivalence between crystal melting model and the dimer model.}. \begin{NB}added\end{NB}%
See \cite{Nagao3} for more conceptual definition in terms of Bridgeland's stability conditions and moduli spaces. 


To define our vertex, we need the following set of data:

\begin{itemize}
\item A map 
$$
\sigma: \{ 1/2,\ldots, L-1/2\}\to \{\pm \}.
$$ 
As already explained in section \ref{sec.closed}, this gives a triangulation of the toric diagram, or equivalently the choice of the resolution of the Calabi-Yau manifold. 

\item A map $\theta:\bZ_h \to \bZ_h$. As explained in Appendix \ref{app.Weyl} in the case of closed BPS invariants, $\theta$ and $\sigma$ specify the chamber structure of the open BPS invariants. 

\item A set of Young diagrams $\lambda$, assigned to external legs of the $(p,q)$-web. This specifies the boundary condition of the non-compact D-branes ending on the $(p,q)$-web. We denote by \begin{NB}corrected\end{NB}%
$\lambda_1,\ldots,\lambda_{L}$ the Young diagrams for the top and the bottom edges of the trapezoid, and by $\lambda_+,\lambda_-$ the remaining two. We sometimes write $\lambda=(\mu,\nu)$, where \begin{NB}corrected\end{NB}%
$\mu=(\mu_1,\ldots, \mu_L)=(\lambda_1,\ldots,\lambda_L)$ and $\nu=(\lambda_+,\lambda_-)$.
In the example shown in Figure \ref{lambdaconv}, there are five external legs and we have five representations.

\begin{figure}[htbp]
\centering{\includegraphics[scale=0.4]{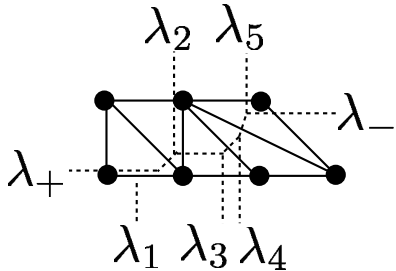}}
\caption{representations assigned to external legs of the $(p,q)$-web. The dotted lines represent the $(p,q)$-web.}
\label{lambdaconv}
\end{figure}

For later purposes, we combine $\mu_1,\ldots, \mu_L$ into a single representation $\mu$ by
\beq
\mu(i-1/2+kL)=\mu_i(k-1/2). \label{combinedmu} 
\eeq
In other words, we choose $\mu$ such that L-quotients of $\mu$ give $\mu_1, \ldots, \mu_L$. By abuse of notation, we use the same symbol $\mu$ for a set of representations $\mu_1,\ldots, \mu_L$ as well as a single representation define above.
\end{itemize}

Given $\sigma, \theta$ and $\lambda$, we define the non-commutative topological vertex
$$
\scC_{\textrm{BPS},(\sigma,\theta;\lambda)}(q,Q).
$$
In the following we drop the subscript BPS for simplicity.

Before going into the general definition, we first illustrate our idea using simple example of the resolved conifold.


\subsection{Example: Resolved Conifold}\label{subsec.conifoldeg}

In this example there is only one $\bP^1$ and the BPS partition function depends on a single positive integer $N:=\left[ B(\alpha_1) \right]$. In the language of $\theta$,\begin{NB}I change the notation: $\theta\to\theta^{-1}$\end{NB}%
$$
\theta(1/2)=1/2-N, \quad \theta(3/2)=3/2+N. \label{thetaN}
$$
We fix $\sigma$ to be 
$$
\sigma(1/2)=+, \quad \sigma(3/2)=-.
$$ 
Without losing generality we concentrate on $N\ge 0$, since $N<0$ corresponds to a flopped geometry, where $\sigma$ is replaced by $-\sigma$ (see Appendix \ref{app.Weyl}).

The ground state crystal for $N=2$ is shown in Figure \ref{conifoldcrystal}.
This crystal, sometimes called a pyramid, consists of infinite layers of atoms, the color alternating between black and white \begin{NB}added\end{NB}%
(\cite{Szendroi, Young1}). 
In the $N$-th chamber there are $N+1$ atoms on the top.

\begin{figure}[htbp]
\centering{\includegraphics[scale=0.25]{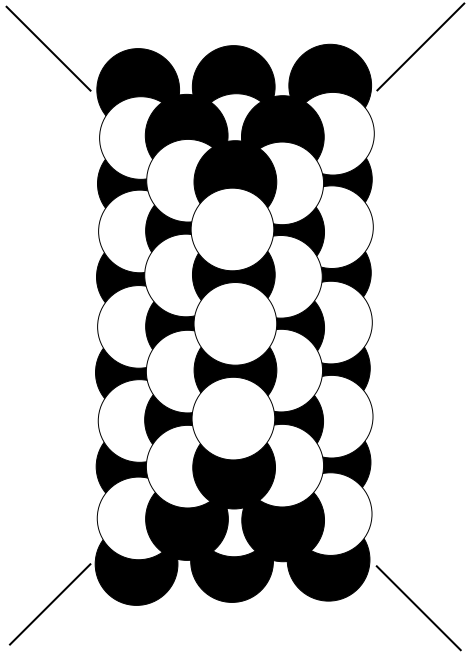}}
\caption{The ground state crystal for the resolved conifold for $N=2$. The crystal consists of an infinite number of layers, and only a finite number is shown here. The ridges of the pyramid are represented by four lines extending to infinity.}
\label{conifoldcrystal}
\end{figure}

The closed BPS partition function is defined by removing a finite set of atoms $\Omega$ from the crystal. When we do this, we follow the melting rule \cite{OY1,MR} such that
whenever an atom is removed from the crystal, we remove all the atom above it.
In other words, since an atom is in one-to-one correspondence with an F-term equivalent class of paths starting from a fixed node of the quiver diagram \cite{OY1}, 
for an arrow $a$ and and an atom $\alpha$ we can define $a\alpha$. The melting rule then says 
\beq
{\rm If } \quad a\alpha\in \Omega,\quad {\rm then} \quad \alpha\in \Omega.
\label{meltingrule}
\eeq

We then define the partition function by summing over such $\Omega$:
\beq
\scZ=\sum_{\Omega} (q_0^{(N)})^{w_0(\Omega)} (q_1^{(N)})^{w_1(\Omega)},
\label{Zdefconifold}
\eeq
where $w_0(\Omega)$ and $w_1(\Omega)$ are the number of white and black atoms in $\Omega$, respectively.

The weights $q_0^{(N)}$ ($q_1^{(N)}$) assigned to white (black) atoms in the $N$-th chamber are determined as follows. We can slice the crystal by the plane, 
and each slice is specified by an integer $i$ (see Figure \ref{crystalslice}). We choose $i$ so that atoms on the top of the crystal is located at $i=0$.

\begin{figure}[htbp]
\centering{\includegraphics[scale=0.25]{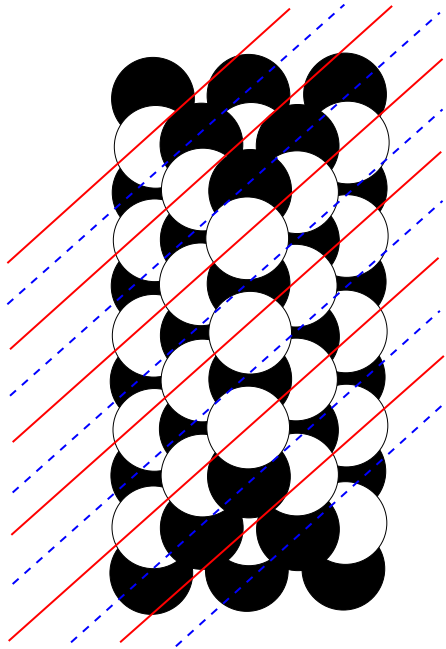}}
\caption{We can slice of the conifold crystal by an infinite number of parallel planes.}
\label{crystalslice}
\end{figure}

The weight $q_i^{(N)}$ depends on the chamber and is given by\begin{NB}corrected\end{NB}%
\beq
 q_0^{(N)}=q_0^{-N}q_1^{-N+1}, 
\quad
 q_1^{(N)}=q_0^{N+1}q_1^{N},
\label{conifoldwt1}
\eeq
when $N$ is odd, and
\beq
 q_0^{(N)}=q_0^{N+1}q_1^{N},
\quad
 q_1^{(N)}=q_0^{-N}q_1^{-N+1},
\label{conifoldwt2}
\eeq
when $N$ is even.
For example, $q_0^{(0)}=q_0, q_1^{(0)}=q_1$ when $N=0$, and \begin{NB}corrected\end{NB}%
$q_0^{(1)}=q_0^{-1} , q_1^{(1)}=q_0^2q_1$.
The change of variables arises from the Seiberg duality on the quiver quantum mechanics \cite{CJ}, geometrically mutations in the derived category of coherent sheaves \cite{CJ,NN}, or in more combinatorial language the dimer shuffling \cite{Young1}. The parameters $q_0, q_1$ defined here are related to the D0/D2 chemical potentials introduced in section \ref{sec.closed} by\footnote{
The equation \eqref{conifoldwt2} is the same for $N$ odd and even if we suitably exchange the two nodes of the quiver diagram. 
The relation \eqref{chemicalpot} can also be written as
$$
q=q_0^{(N)} q_1^{(N)}, \quad Q=(q_0^{(N)})^{N} (q_1^{(N)})^{N+1}, 
$$
when $N$ is even, and $q_0^{(N)}$ and $q_1^{(N)}$ exchanged when $N$ odd. This coincides with the expression in \cite{CJ}.
}
\beq
q=q_0 q_1, \quad Q_1=q_1.
\label{chemicalpot}
\eeq

Now let us discuss the open case. When non-trivial representations are assigned to each of the four external legs of the $(p,q)$-web, the only thing we need to do is to change the ground state of the crystal.

The crystal has four ridges, corresponding to four external legs of the $(p,q)$-web. When we assign a representation, we remove the atoms with the shape of the Young diagram. More precisely, we remove the atoms with the shape of the Young diagram in the asymptotic direction of the $(p,q)$-web, as well as all the atoms above them, so that the melting rule is satisfied. See Figure \ref{openconifoldeg} for an example.

\begin{figure}[htbp]
\centering{\includegraphics[scale=0.25]{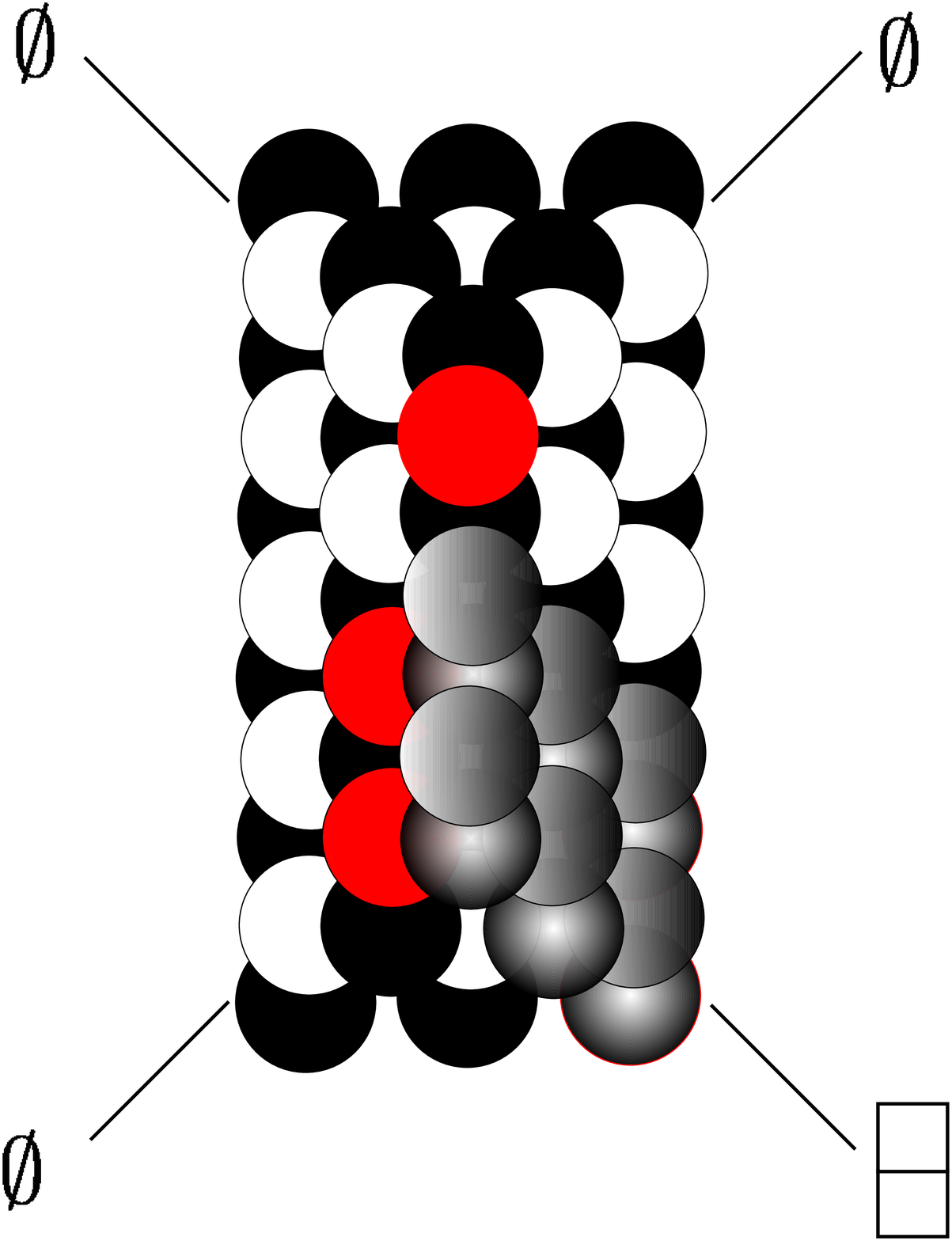}}
\caption{The pyramid for open BPS invariants. A non-trivial representation 
$(1,1)$ is placed on with one of the four external lines. As compared with the previous figure, atoms colored gray, corresponding to the Young diagram, are removed from the crystal. The red atoms have no atoms above them.}
\label{openconifoldeg}
\end{figure}

The partition function is defined in exactly the same way by \eqref{Zdefconifold}, and the result is denoted by $\scC_{(\sigma, \theta;\lambda)}$.

\bigskip

Several comments are now in order.

\begin{NB}Added\end{NB}%
First, let us explain the origin of the name ``the non-commutative topological {\it vertex}''.
Recall that, in commutative case, topological vertex is defined for $\bC^3$. 
For a general affine toric Calabi-Yau manifold $X$, we divide the polygon into triangles and assign a topological vertex to each trivalent vertex of the dual graph. 
We can get the topological string partition function for the smooth toric Calabi-Yau manifold $Y$ by gluing them with propagators. 
Similarly, \begin{NB2} added \end{NB2} assume that a polygon is divided into trapezoids. 
Then we can assign a non-commutative topological vertex to each vertex of the dual graph and glue them by propagators. 
The BPS partition function defined in this way \begin{NB2} changed \end{NB2} is related to the 
topological string partition function 
via wall-crossing\footnote{
Given a devision of a polygon into trapezoids, we get a partial resolution of $X$ and a non-commutative algebra $A$ over the partial resolution, which is derived equivalent to $Y$. The BPS partition function 
given by gluing non-commutative topological vertices counts torus invariants $A$-modules.}.
In \cite{BCY2,BCY1} they study the case when a polygon is divided into (not necessary minimal) triangles.

\begin{NB}Removed: First, let us emphasize the difference between ordinary (`commutative') topological vertex and our non-commutative topological vertex. 
In commutative case, topological vertex is defined only for $\bC^3$. For a more general toric Calabi-Yau manifold, we cut the $(p,q)$-web into pieces,
assign a topological vertex to each trivalent vertex and 
we glue them by propagators. For example, for the resolved conifold we need to glue two vertices. 
However, our NCTV is defined for any toric Calabi-Yau manifold, and thus we do not need to glue them. In this sense, our vertex is not a `vertex'.
\end{NB}

Second, it is possible to give more geometric definition of the vertex \begin{NB}Added\end{NB}%
(see \cite{Nagao3}).
For the closed BPS invariants, 
the crystal arises as a torus fixed point of the moduli space of the modules of the path algebra quiver (under suitable $\theta$-stability conditions). The moduli space is the vacuum moduli space of the quiver quantum mechanics arising as the low-energy effective theory of D-branes \cite{OY1}. The similar story exists in our case. Namely, the crystal is in one to one correspondence with the fixed point of the moduli space arising from a quiver diagram. For example, for conifold with $\lambda=${\tiny \yng(1,1)}, the quiver is given in Figure \ref{openquiver}.

\begin{figure}[htbp]
\centering{\includegraphics[scale=0.5]{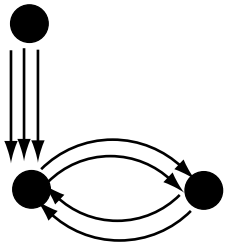}}
\caption{Quiver diagram for the open invariant with 
$\lambda=(1,1)$. This is the Klebanov-Witten quiver \cite{KW} with an extra node and extra three arrows starting from it. The three arrows correspond to three red atoms in Figure \ref{openconifoldeg}.}
\label{openquiver}
\end{figure}

Third, in the case of $\bC^3$, our vertex reproduces the topological vertex of $\bC^3$
by definition.


\subsection{General Definition from Crystal Melting}

We next give a general definition of the vertex. Readers not interested in the details of the definition of the non-commutative topological vertex
can skip this section on first reading.

Given a boundary condition specified by $\sigma, \theta$ and $\lambda=(\mu,\nu)$, we would like to construct a ground state of the crystal, and
determine the weights assigned to the atoms of the crystal.

The basic idea is the same as in the conifold example. First, the closed string BPS partition function is equivalent to the statistical partition function of crystal melting. The ground state crystal can be sliced by an infinite number of parallel planes parametrized by integers $n\in \bZ$, just as in Figure \ref{crystalslice}. On each slice, there are infinitely many atoms, labeled by integers $(x,y)\in \bZ_{\ge 0}^2$. Therefore, the atoms in the crystal are label by $(n,x,y)\in \bZ\times \bZ_{\ge 0}^2$.

\bigskip

Let us show this in the example of the Suspended Pinched Point. The crystal in
Figure \ref{SPPcrystal} clearly shows this structure.

\begin{figure}[htbp]
\centering{\includegraphics[scale=0.2]{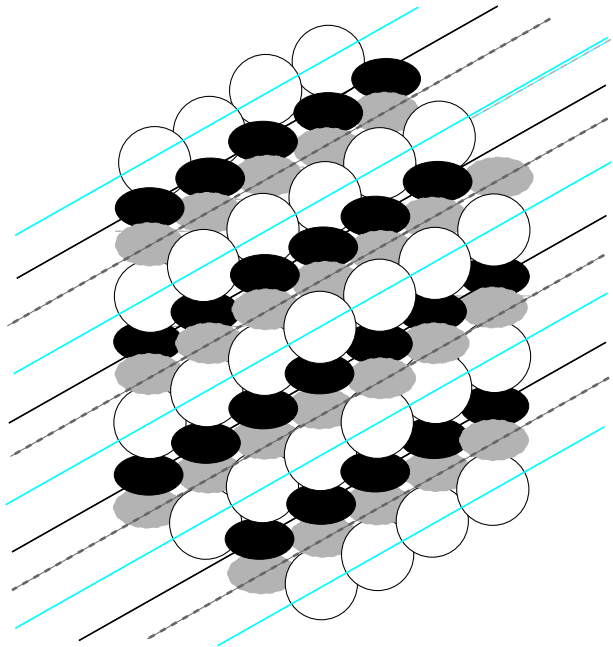}}
\caption{The crystal for the Suspended Pinched Point. We can slice the crystal along planes represented by lines, which come with three different colors.}
\label{SPPcrystal}
\end{figure}

Another way of explaining this is to 
construct a crystal starting from a bipartite graph on $\bR^2$, shown in Figure \ref{SPPbipartite}\footnote{This is a universal cover of the bipartite graph on $\bT^2$, which appears in the study of four-dimensional $\scN=1$ quiver gauge theories. See \cite{BT1,BT2,BT3} for original references, and \cite{Kennaway,Yamazaki} for reviews.}.
In this example, the bipartite graph consists of hexagons and squares, and 
periodically changes its shape along the horizontal directions.

\begin{figure}[htbp]
\centering{\includegraphics[scale=0.25]{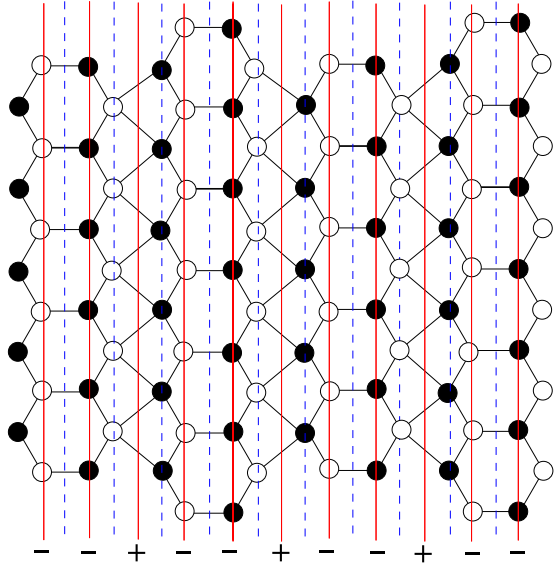}}
\caption{The bipartite graph for Suspended Pinched Point. We here take $\sigma=\sigma_1$ and $\theta=id$. The red undotted (the blue dotted) lines have half-integer (integer) values of the coordinate along the horizontal axis.}
\label{SPPbipartite}
\end{figure}

Now the atoms of the crystal are located at the centers of the faces of the bipartite graph. and it thus follows we can slice the crystal along the horizontal axis. Each slice consist of an infinite number of atoms labeled by two integers $(x,y)\in (\bZ_{\ge 0})^2$, since there are two directions, the horizontal direction and the perpendicular direction to the paper\footnote{In general the bipartite graph is determined by $\sigma$. 
A hexagon (a square) corresponds to $\scO(-2,0)$-curve ($\scO(-1,-1)$-curve).
In other words, the $i$ th polygon is a hexagon (square) if 
$\sigma(i)=\sigma(i+1) (\sigma(i)\ne \sigma(i+1))$.}.

\bigskip

Now consider the open case. In this case, we construct a new ground state by removing atoms from the closed ground state. By the melting rule, the atoms 
removed from the $n$-th plane should be labeled by $(x,y)\in \scV(n)$, where
$\scV(n)$ is a Young diagram. Depending on the representations on external legs, $\scV(n)$ increases or decreases as we change $n$. Thus the ground state crystal for open BPS invariants are determined by such a sequence of Young diagrams $\{\scV(n)\}$, called transitions below. In the following we make this idea more rigorous.

Let us begin with some notations.
Let $\mu$ and $\mu'$ be two Young diagrams. 
We say $\mu\pg\mu'$ if the row lengths satisfy
\[
\mu_1\geq\mu'_1\geq\mu_2\geq\mu'_2\geq\cdots,
\]
and $\mu\mg\mu'$ if the column lengths satisfy
\[
\tenchi\mu_1\geq\tenchi\mu'_1\geq\tenchi\mu_2\geq\tenchi\mu'_2\geq\cdots.
\]
We define a {\it transition} $\V$ of Young diagrams of type $\type$ as a map 
from the set of integers $\bZ$ to 
the set of Young diagrams such that
\begin{itemize}
\item $\V(n)=\nu_-$ for $n\ll0$ and $\V(n)=\nu_+$ for $n\gg0$,
\item $\V(h-\mu\circ\theta(h)/2)\overset{\sigma\circ\theta(h)}{\succ}\V(h+\mu\circ\theta(h)/2)$.
\end{itemize}
Then as shown in \cite{Nagao2} there is a transition $\Vmin$ of Young diagrams of type $\type$ such that for any transition $\V$ of Young diagrams of type $\type$  and for any $n\in\Z$ we have $\V(n)\supseteq \Vmin(n)$. 
The transition $\Vmin(n)$ is the sequence of transitions discussed above.

For a transition $\V$ of Young diagram of type $\type$, the ground state crystal can be defined by
\[
A(\V):=\{a(n,x,y)\mid n\in \bZ, x,y \in \bZ_{\ge 0}, \ (x,y)\notin \V(n)\}.
\]
where $a(n,x,y)$ denotes the atom at position $(n,x,y)$.

Having defined the ground state crystal, the partition functions is defined again as the sum over a subset $\Omega$ of $A(\Vmin)$ satisfying the following two conditions:\footnote{It is straightforward to show that a subset $\Omega \subset A(\Vmin)$ satisfies the two conditions if and only if $P(\Vmin)\backslash \Omega=A(\V)$ for a transition $\V$ of Young diagram of type $\type$.}

\begin{itemize}
\item $\Omega$ is finite set, and 
\item $\Omega$ satisfies the melting rule \eqref{meltingrule}. In other words, if 
$a'\in \Omega$ and $a'=a \alpha$ for an arrow $\alpha$, then $a\in \Omega$.\footnote{
We can also explicitly write down the melting rule using the coordinates $(n,x,y)$.
Let us write $a \sqsupset a'$ when there is a path (a composition of arrows) $\alpha$ such that $a'=a \alpha$. 
The partial order $\sqsupset$ is then generated by
$$
\begin{array}{lr}
a(h-1/2,x,y)\sqsupset a(h+1/2,x,y) & \lambda\circ \theta(h)=+,\\
a(h+1/2,x,y)\sqsupset a(h-1/2,x,y) & \lambda\circ \theta(h)=-,\\
a(h+1/2,x,y)\sqsupset a(h-1/2,x+1,y) & \lambda\circ \theta(h)=+, \sigma\circ\theta(h)=-,\\
a(h+1/2,x,y)\sqsupset a(h-1/2,x,y+1) & \lambda\circ \theta(h)=+, \sigma\circ\theta(h)=+,\\
a(h-1/2,x,y)\sqsupset a(h+1/2,x+1,y) & \lambda\circ \theta(h)=-, \sigma\circ\theta(h)=-,\\
a(h-1/2,x,y)\sqsupset a(h+1/2,x,y+1) & \lambda\circ \theta(h)=-, \sigma\circ\theta(h)=+,\\
a(n,x,y)\sqsupset a(n,x+1,y) & \sigma\circ\theta(n-1/2)=\sigma\circ\theta(n+1/2)=+,\\
a(n,x,y)\sqsupset a(n,x,y+1) & \sigma\circ\theta(n-1/2)=\sigma\circ\theta(n+1/2)=-.
\end{array}
$$
}
\end{itemize}

For a crystal $\Omega \in A(\Vmin)$, we define the weight $w(\Omega)_i$ by the number of atoms with the color $i$ contained in $\Omega$:
\[
w(\Omega)_i:=\sharp\{a(n,x,y)\in \Omega\mid n\equiv i\ (\mr{mod}\,L)\}.
\]
Also, for $\theta$, we put 
\[
q^\theta_i:=
\begin{cases}
q_{\theta^{-1}(i-1/2)+1/2}\cdot q_{\theta^{-1}(i-1/2)+3/2}\cdot\cdots\cdot q_{\theta^{-1}(i+1/2)-1/2} &  (\theta^{-1}(i-1/2)<\theta^{-1}(i+1/2)),\\
q_{\theta^{-1}(i-1/2)-1/2}^{-1}\cdot q_{\theta^{-1}(i-1/2)-3/2}^{-1}\cdot\cdots\cdot q_{\theta^{-1}(i+1/2)+1/2}^{-1} & (\theta^{-1}(i-1/2)>\theta^{-1}(i+1/2)),
\end{cases}
\]
where we define $q_i$ for $i\in \bZ$ periodically,
$$
q_{i+L}=q_i.
$$
We then define the vertex by
\[
\mathcal{C}^{\rm ref}_{\type}(q_0,\ldots,q_{L-1}):=\sum_{\Omega} (q^{\theta}_0)^{w(\Omega)_0}\cdot\cdots\cdot (q^{\theta}_{L-1})^{w(\Omega)_{L-1}}.
\]
The parameters $q_0, \ldots, q_{L-1}$ defined here are related to the D0/D2 chemical potentials introduced in section \ref{sec.closed} by
\beq
q=q_0\ldots q_{L-1}, \quad Q_i=q_i\quad (i=1,\ldots, L-1).
\label{chemicalweight}
\eeq

\begin{NB2} added the whole section \end{NB2}

\subsection{Refinement}\label{subsec.refined}

We can also generalize the definition to include the open refined BPS invariants\footnote{See \cite{KontsevichS,DG,BBS} for closed refined BPS invariants}. 

Let us recall the meaning of the refined BPS counting, first in the closed case.
When the type IIA brane configuration is lifted to M-theory \cite{GV1} and when we use the 4d/5d correspondence \cite{GSY,DVVafa}, the D0/D2-branes are lifted to spinning M2-branes in $\bR^5$, which has spin under the little group in 5d, namely $SO(4)=SU(2)_L\times SU(2)_R$. The ordinary BPS invariant is defined as an index; it keeps only the $SU(2)_L$ spin, while taking an alternate sum over the $SU(2)_R$ spin. 
The refined closed BPS invariants is defined by taking both spins into account.

The situation changes slightly when we consider open refined BPS invariants.
The D4-branes wrapping Lagrangians, when included, are mapped to M5-branes on $\bR^3$. This means that $SO(4)$ is broken to $SO(2)$, and we have only one spin. However, there is an $SO(2)$ R-symmetry for $\scN=2$ supersymmetry in three dimensions, and in the definition of the ordinary open BPS invariants we keep only one linear combination of the two, while tracing out the other combination \cite{LMV}. The refined open BPS invariants studied here takes boths of the two charges into account.

In the language of crystal melting used in this paper, the open refined BPS invariants are defined simply by modifying the definition of the weights. Here, we explain how to modify the weights in the case of $\mu=\emptyset$.

For an integer $n$ with 
$$n\equiv \pi(n) \quad \textrm{mod}\  L, \quad 0\le \pi(n)\le L-1,$$ define
\[
w(\Omega)_n:=\sharp\{ a(n,x,y)\in \Omega \}.
\]
We also define the weights by
\[
\tilde{q}^\theta_n:=
\begin{cases}
\tilde{q}_{\theta^{-1}(n-1/2)+1/2}\cdot \tilde{q}_{\theta^{-1}(n-1/2)+3/2}\cdot\cdots\cdot \tilde{q}_{\theta^{-1}(n+1/2)-1/2} &  (\theta^{-1}(n-1/2)<\theta^{-1}(n+1/2)),\\
\tilde{q}_{\theta^{-1}(n-1/2)-1/2}^{-1}\cdot \tilde{q}_{\theta^{-1}(n-1/2)-3/2}^{-1}\cdot\cdots\cdot \tilde{q}_{\theta^{-1}(n+1/2)+1/2}^{-1} & (\theta^{-1}(n-1/2)>\theta^{-1}(n+1/2)),
\end{cases}
\]
where
$$
\tilde{q}_n=q_{\pi(n)}
$$
when $n \not\equiv 0 \ \textrm{mod} \ L$,
and
$$
\tilde{q}_n=
\begin{cases}
q_+ & n>0 \\
(q_+ q_-)^{1/2}&  n=0\\
q_- & n<0
\end{cases}
$$
when $n \equiv 0 \ \textrm{mod} \ L$.
We then define the refined vertex by
\[
\mathcal{C}_{(\sigma,\theta\sss;\sss \emptyset,\nu)}(q_+,q_-,q_1\ldots,q_{L-1}):=\sum_{\Omega} \prod_{n\in \bZ} (\tilde{q}^{\theta}_n)^{w(\Omega)_n}.
\]
By definition, the refined vertex reduces to the unrefined vertex by setting $q_+=q_-=q_0$.
The reader can refer to \cite{Nagao2} for the definition of weights in general cases $\mu\ne \emptyset$.

\subsection{Consistency Checks of Our Proposal}\label{sec.check}

In Appendix \ref{app.proof} we gave a purely combinatorial definition of the non-commutative topological vertex. We now claim that is captures open BPS invariants in the following sense.

Consider a generalized conifold (a toric Calabi-Yau manifold without compact 4-cycles) with representations assigned to each leg of the $(p,q)$-web. Each representation specifies a boundary condition on the non-compact D4-brane wrapping Lagrangian 3-cycle of topology $\bR^2\times S^1$ \cite{AV}.

In the absence of D4-branes, we are counting particles of D0/D2-branes wrapping 0/2-cycles, which makes a bound state with a single D6-brane filling the entire Calabi-Yau manifold. When the D4-branes are included, D2-branes can wrap disks ending on the worldvolume of D4-branes. The degeneracy of such D-brane configurations is what we mean by the open BPS degeneracies. Note that supersymmetry is broken by half due to the inclusion of D4-branes; our counting of BPS particles makes sense because we are counting BPS states in lower dimensions, where the minimal amount of supersymmetry is lower.

We can provide several consistency checks of our proposal.
First, our vertex by definition reduces to the closed BPS invariant when all the representations $\lambda$ are trivial:
$$
\scC_{\textrm{BPS}, (\sigma,\theta;\lambda=\emptyset)}=\scZ^c_{\textrm{BPS},(\sigma,\theta)}.
$$
The second consistency check comes from the wall crossing phenomena. As shown in \cite{Nagao2}, the vertex goes through a series of wall crossings as we move around the closed string moduli space (K\"ahler moduli of the Calabi-Yau manifold), just as in the case of closed invariants. 
It was also shown in \cite{Nagao2} that in the chamber $C_{\rm top}$ where the closed BPS partition function reduces to the closed topological string partition function, our vertex gives the same answer as that computed from the topological vertex (in the standard framing):
$$
\scC_{\textrm{BPS}, (\sigma,\theta;\lambda)} \Big|_{C_{\rm top}}=\scC_{\textrm{topological vertex}, \lambda}.
$$
The third consistency check comes from the fact that the wall crossing factor is independent of representations. In other words,
\[
\bar{\mathcal{C}}_{\type}(q_0,\ldots,q_{L-1}):=
\frac{\mathcal{C}_{\type}(q_0^\theta,\ldots,q_{L-1}^\theta)}{\mathcal{C}_{(\sigma,\theta\sss;\sss\emptyset,\emptyset)}(q_0^\theta,\ldots,q_{L-1}^\theta)}.
\]
does not depend on $\theta$ \cite{Nagao2,Nagao3}\footnote{\label{normalization}To be exact, we have to normalize the generating function by a monomial. See \cite[Corollary 3.21]{Nagao3} for the precise statement.}. This means that the open BPS partition function, which is defined by the sum over representations, takes a factorized form 
\beq
\scZ^{o}_{\rm BPS} =\frac{\scZ^{o}_{\rm BPS}}{\scZ^{o}_{\rm top}} \scZ^o_{\rm top}
= \frac{\scZ^{c}_{\rm BPS}}{\scZ^{c}_{\rm top}} \scZ^o_{\rm top}.
\label{Zoproduct}
\eeq
Since $\scZ^{c}_{\rm BPS}/\scZ^{c}_{\rm top}$ takes an infinite product form as explained in section \ref{sec.closed} and $\scZ^o_{\rm top}$ also takes the infinite product form \cite{OVknot}, $\scZ^{o+c}_{\rm BPS}$ itself should take an infinite product form,
which is consistent with a suitable generalization of the argument of \cite{AOVY}.

Using \eqref{Zoproduct}, we can compute our vertex using the ordinary topological vertex formalism.
In the next section, we give a yet another way of computing the non-commutative topological vertex. The advantage of our approach is that the final expression manifestly takes a simple infinite product form, and we do not have to worry about the summation of Schur functions.


\section{The Closed Expression for the Vertex}

In this section we give a closed expression for our non-commutative topological vertex. We do this by proving a curious identity stating the equivalence of our vertex for a toric Calabi-Yau manifold $X$ with a closed BPS partition function for an orbifold of $X$\footnote{This is reminiscent of story of \cite{GO1}, where the `bubbling geometry' $X'$ is constructed for given a toric Calabi-Yau manifold $X$ such that the open+closed topological string partition function on $X$ is equivalent to closed topological string partition function on $X'$. However, our story is different in that the vertex computes only a part of the full open BPS partition function; the partition function itself is given by summation of our vertices over representations.}. For another method using vertex operators, see \cite{Nagao3,Sulkowski}.

Start from a non-commutative topological vertex for a generalized conifold $X$, which has $L-1$ compact $\bP^1$'s. As we discussed above, for the definition of the vertex we need 
(1) $\sigma$ for a choice of the crepant resolution of $X$,
(2) a map $\theta$ specifying the chamber together with $\sigma$
(3) a set of representations $\lambda=(\mu,\nu)$,
and the resulting vertex is denoted by $\scC_{\sigma,\theta;\lambda}(q,Q)$. In the following we consider the special case $\nu=(\lambda_+,\lambda_-)=(\emptyset, \emptyset)$.

Choose an $M$ and consider the $\bZ_M$ orbifold $X'$ of $X$. 
We choose the orbifold action such that when the toric diagram of $X$ is a trapezoid with a top and the bottom edge of length $L_+$ and $L_-$ respectively, $X'$ has length $M L_+$ and $M L_-$.
We also choose map
$$
\sigma': \{1/2,3/2,\ldots, ML-1/2\}\to \{\pm 1\}.
$$
and
$$
\theta': \bZ_h\to \bZ_h, \quad \theta'(h+ML)=\theta'(h)+ML
$$
such that
\beq
\sigma\circ \theta=\sigma' \circ\theta', \quad \mu\circ \theta=\emptyset\circ \theta'.
\label{assumptionmain}
\eeq

Then
\beq
\mathcal{C}_{(\sigma,\theta\sss;\sss \emptyset,\mu)}(q,Q)=
\mathcal{C}_{(\sigma',\theta'\sss;\sss \emptyset, \emptyset)}(q',Q')|_{q^{\theta}_i=q'{}^{\theta'}_i=q'{}^{\theta'}_{i+L}=\cdots q'{}^{\theta'}_{i+(M-1)L}},
\label{main2}
\eeq
where $i=0,\ldots, L-1$.
\begin{NB}Modified: See Appendix \ref{app.proof} for the proof of this statement.
The Appendix also discusses \end{NB}%
See Appendix \ref{app.proof} for an explicit method for choosing such $M,\sigma', \theta'$ satisfying \eqref{assumptionmain} as well as generalization of \eqref{main2} to the case of refined BPS invariants.

Since the infinite-product expression for closed BPS partition function for a generalized conifold is already known (section \ref{sec.closed}), we have a closed expression of our vertex when $\nu=\emptyset$.

\section{Examples}
Let us illustrate the above procedure by several examples.

\subsection{$\mathbb{C}^3$}
First, we begin with the non-commutative topological vertex for $\mathbb{C}^3$. Since there is no wall crossing phenomena involved in this case\footnote{There is one wall between $R>0$ and $R<0$, however we do not discuss such a wall since we are specializing to the case $R>0$.}, the vertex should coincide with the ordinary topological vertex, thus providing a useful consistency check of our proposal.
For $\bC^3$, we have $L=1, \sigma(1/2)=-1, \theta=id$.

Take $\mathbb{C}^3$ with representation $\lambda=(\mu,\nu=\emptyset)$ with $\mu=(N,N-1,\ldots, 1)$ at one leg.
The above-mentioned procedure gives
$M=2$, and thus $X=\bC^2/\bZ_2\times \bC$.
The method in Appendix \ref{app.proof} gives
$$
\theta'(1/2)= 1/2-N,\quad  \theta'(3/2)=3/2+N, \quad  \sigma'=-1,
$$
and thus
$$
\left[ B(\alpha_1) \right]=N.
$$
The weight is given by \eqref{conifoldwt1} or \eqref{conifoldwt2}.
By solving for 
$$
q'{}_0^{\theta'}=q'{}_1^{\theta'}=q,
$$
we have 
$$
q'{}_0=q^{-2N+1}, \quad q'{}_1=q^{2N+1}
$$
in the case of $N$ odd.
\begin{NB}Modified
(for $N$ odd; the case of $N$ even is similar)
$$
q'{}_0^{\theta'}=q'{}_0^{-k} q'{}_1^{-k+1}, \quad
q'{}_1^{\theta'}=q'{}_0^{k+1} q'{}_1^k
$$
and by solving for 
$$
q'{}_0^{\theta'}=q'{}_1^{\theta'}=q
$$
we have 
$$
q'{}_0=q^{-2k+1}, \quad q'{}_1=q^{2k+1}
$$
\end{NB}%
Substituting this into the closed BPS partition function
$$
\prod_{n>0}(1-q'{}_0^n q'{}_1^n)^{-2n} 
\prod_{n>0}(1-q'{}_0^n q'{}_1^{n+1})^{-n} \prod_{n>N}(1-q'{}_0^n q'{}_1^{n-1})^{-n},
$$
we have
\begin{NB}corrected\end{NB}%
\beq
M(q)/ \prod_{n=1}^N (1-q^{2n-1})^{(N+1)-n}.
\label{egresult1}
\eeq
This coincides with the know expression for the topological vertex \cite{AKMV}\footnote{In the normalization of \cite{AKMV}, \eqref{egresult1} coincides with 
$M(q)\, q^{-\| \mu^{\tenchi} \| /2 }\, C_{\mu,\emptyset,\emptyset}$}. The case of $N$ even in similar. \begin{NB2}added\end{NB2}

\subsection{Resolved Conifold}
Now let us discuss the next simplest example, the resolved conifold.

\begin{NB}
In this case, $L=2$ and we choose a triangulation given by $\sigma(1/2)=+, \sigma(3/2)=-$. The $N$-th chamber is given by 
$\theta_N(1/2)=1/2-N, \theta(3/2)=2/3+N$.
\end{NB}%

Consider the representation $\mu=({\rm \tiny \yng(2,1)},{\rm \tiny \yng(1)})$, with \begin{NB}changed $\theta=\theta_0=id$\end{NB}%
$\theta=id, \sigma(1/2)=+, \sigma(3/2)=-$. In this case, the method in Appendix \ref{app.proof} gives \begin{NB}Modified\end{NB}%
$$
\theta'(1/2)=-7/2,\quad \theta'(3/2)=-1/2,\quad \theta'(5/2)=11/2,\quad 
\theta'(7/2)=13/2
$$
with
$$
\sigma'(1/2)=+,\quad \sigma'(3/2)=-,\quad \sigma'(5/2)=-,\quad \sigma'(7/2)=+.
$$
Then we have\begin{NB}Modified\end{NB}%
\beq\label{eq_Bforconifold}
[B(\alpha_1)]=[B(\alpha_3)]=0,\quad 
[B(\alpha_2)]=[B(\alpha_2+\alpha_3)]=1,\quad
[B(\alpha_1+\alpha_2)]=[B(\alpha_1+\alpha_2+\alpha_3)]=2,  
\eeq
and 
\beq\label{eq_qforconifold}
q'_0=q_0^{-3}q_1^{-3}=q^{-3}, \quad
q'_1=q_0 q_1^{2}=q Q , \quad
q'_2=q_0^{3}q_1^{3}=q^3, \quad
q'_3=q_0=q Q^{-1}.
\eeq
The closed BPS partition function corresponding to the B-field \eqref{eq_Bforconifold} is the following:
\begin{NB2} changed: $q<-> Q$\end{NB2}
\begin{align}
\prod_{n>0} (1-q'^n)^{-4n} 
\prod_{n>0} (1-q'^n Q'_1)^n \prod_{n>0} (1-q'^n Q'{}_1^{-1})^n
\prod_{n>0} (1-q'^n Q'_2)^{-n} \prod_{n>1} (1-q'^n Q'{}_2^{-1})^{-n}\notag\\
\prod_{n>0} (1-q'^n Q'_3)^n \prod_{n>0} (1-q'^n Q'{}_3^{-1})^n 
\prod_{n>0} (1-q'^n Q'_1 Q'_2)^n \prod_{n>2} (1-q'^n Q'{}_1^{-1} Q'{}_2^{-1})^n \notag\\
\prod_{n>0} (1-q'^n Q'_2 Q'_3)^n \prod_{n>1} (1-q'^n Q'{}_2^{-1} Q'{}_3^{-1})^n
\prod_{n>0} (1-q'^n Q'_1 Q'_2 Q'_3)^{-n} \prod_{n>2} (1-q'^n Q'{}_1^{-1} Q'{}_2^{-1} Q'{}_3^{-1})^{-n}.\label{eq_closedBPS}
\end{align}
Substituting \eqref{eq_qforconifold} for \eqref{eq_closedBPS} under the identification \eqref{chemicalweight}, we obtain the open BPS partition function:\begin{NB}I think the notation $Z^c_{\mathrm{NCDT}}$ is not suitable in physics paper, but I have no idea what is the correct notation in physics context. Please modify it.\end{NB}%
\[
\scZ^c_{\mathrm{NCDT}}(q,Q)\cdot (1-q)^{-3}(1-q^3)^{-1}(1-Q)^2(1-q^2 Q)(1-q^2Q^{-1}),
\]
\begin{NB2}\[
\scZ^c_{\mathrm{NCDT}}(q,Q)\cdot (1-q)^{-3}(1-q^3)^{-1}(1-Q_1)^2(1-q^2Q_1)(1-q^2Q_1^{-1})
\]\end{NB2}
where 
\begin{align*}
\scZ^c_{\mathrm{NCDT}}(q,Q):&=
\prod_{n>0} (1-q^n)^{-2n} 
\prod_{n>0} (1-q^n Q)^n 
\prod_{n>0} (1-q^n Q^{-1})^n \\
&=\scZ^c_{\mathrm{top}}(q,Q) \scZ^c_{\mathrm{top}}(q,Q^{-1}) .
\end{align*}
This coincides with the expression computed from the result of \cite{IK}
$$
M(q)^2 \prod_{n>0} (1-q^n Q) \frac{1}{(1-q)^3 (1-q^3)} (1-Q)^2 (1-Q q^2)(1-Q q^{-2})
$$
up to the wall crossing factor and the normalization by a monomial as remarked in footntote \ref{normalization}.

\section{Discussions}

Let us conclude this paper by pointing out several interesting problems which require further exploration.

\begin{itemize}

\item In this paper, we only discussed wall crossings with respect to the closed string moduli. However, we expect that there should be wall crossing associated with the open string moduli as well. Therefore, the question arises: At which values of the open string moduli is our vertex defined? How does the vertex change as we change the open string moduli? Some of these issues will be discussed in \cite{AY}.

\item It would be interesting to see if there is a generalization of GV large $N$ duality \cite{GVdual} including the background dependence, and whether there is a Chern-Simons interpretation of the wall crossing phenomena (see \cite{CecottiV} for a related idea).

\item It would be interesting to extend our definition and computation to a Calabi-Yau manifold with compact 4-cycles and with multiple D6-branes. \begin{NB2} adde\end{NB2} The former should be possible by combining our vertices, and will be related to the Nekrasov's partition function \cite{Nekrasov}.

\item 
Derivation of open BPS invariants from supergravity viewpoint \cite{DenefM} is another interesting problem. See \cite{Denefhole} for the related discussion in the case of orientifolds. 
Another related question is the connection of the crystal melting expansion of $\scZ^o_{\rm BPS}$ with open version of OSV conjecture \cite{ANV}. 

\end{itemize}


\section*{Acknowledgments}
\begin{NB2} changed \end{NB2}
M.~Y. would like to thank M.~Aganagic and D.~Krefl for collaboration on related projects.
He would also like to thank M.~Cheng, J.~Gomis, L~Hollands, D.~Krefl,
T.~Okuda, H.~Ooguri, N.~Saulina, P.~Sulkowski, C.~Vafa and J.~Walcher for stimulating discussions. 

K.~N. is supported by JSPS Fellowships for Young Scientists (No.\ 19-2672).
M.~Y. is supported by DOE grant DE-FG03-92-ER40701, by the JSPS fellowships for Young Scientists, by the 
World Premier International Research Center Initiative, 
and by the 
Global COE Program for Physical Sciences Frontier at 
the University of Tokyo, both by MEXT of Japan. 
M.Y. would also like to thank Centro de Ciencias de Benasque Pero Pascul, Simons Center for Geometry and Physics at Stony Brook and Berkeley Center for Theoretical Physics for hospitality, where part of this work has been performed.


\appendix

\section{The Parametrization of Chambers by the Weyl Group} \label{app.Weyl}
In this section we explain the parameterization of chambers of closed BPS invariants by maps $\sigma$ and $\theta$, as claimed in the main text.

The map $\theta$ is defined to be a map from the set of half-integers $\bZ_h$ to itself
$$
\theta: \bZ_h\to \bZ_h,
$$
satisfying the following two conditions.
First,
\beq
\theta(h+L)=\theta(h)+L \label{theta+L}
\eeq
for any $h\in\bZ_h$. In other words, $\theta$ is periodic with period $L$.
Second,
\beq
\sum_{i=1}^L \theta\left(i-\frac{1}{2}\right)=\sum_{i=1}^L \left(i-\frac{1}{2}\right).
\label{thetasum}
\eeq
Therefore $\theta$ is specified by \begin{NB}corrected\end{NB}%
$L-1$ (half-)integers, namely \begin{NB}corrected\end{NB}%
$L$ half-integers $\theta(1/2),\ldots, \theta(L-1/2)$, subject to one constraint \eqref{thetasum}. 
Let us assume for the moment that $\theta$ satisfies the condition \begin{NB}I change the notation $\theta\to \theta^{-1}$\end{NB}%
\beq
\theta\left(\frac{1}{2}\right)< \theta\left(\frac{3}{2}\right) < \ldots 
< \theta\left(L-\frac{1}{2}\right).
\label{thetaincrease}
\eeq
We will discuss other cases later.

Given $\sigma$, we have a specific choice of resolution having \begin{NB}corrected\end{NB}%
$L-1$ $\bP^1$'s.
Moreover, given a map $\theta$ we can determine the corresponding value of the B-field $B_{\theta}$ by
\begin{NB}corrected
\beq
\left[ B_{\theta}(\alpha_i+\ldots +\alpha_j) \right]=
\left[ \frac{\theta^{-1}(j-1/2)-\theta^{-1}(i-1/2)}{L} \right].  \label{thetaB}
\eeq
\end{NB}%
\beq
\left[ B_{\theta}(\alpha_i+\ldots +\alpha_j) \right]=
\sharp \{m\in\Z\mid \theta(i-1/2)<mL<\theta(j+1/2)\}.  \label{thetaB}
\eeq
It is easy to see that this gives a well-defined values of the integer parts of the B-field, which parametrize the chamber as explained in the main text.
Conversely, it can also be proven that given any B-field, we can find a corresponding $\theta$ uniquely\footnote{This comes from the fact that the action of the affine Weyl group on the space of B-fields is faithful.}.

When the condition \eqref{thetaincrease} is not satisfied, we need to change the choice of the crepant resolution.
Choose a permutation $\Sigma$ of $\{ 1/2,\ldots, L-1/2\}$ such that
\begin{NB}I change the notation $\theta\to \theta^{-1}$\end{NB}%
\beq
\theta\left(\Sigma\left(\frac{1}{2}\right)\right)< \theta\left(\Sigma\left(\frac{3}{2}\right)\right) < \ldots 
< \theta\left(\Sigma\left(L-\frac{1}{2}\right)\right)
\label{permdef}
\eeq
holds. Then we replace $\theta$ by \begin{NB}corrected\end{NB}%
$\theta':=\theta\circ \Sigma$ \begin{NB}Modified\end{NB}%
and we choose $\sigma'$ so that $\sigma\circ\theta=\sigma'\circ\theta'$\footnote{In fact, we can take $\sigma\circ\underline{\theta}\circ\Sigma^{-1}\circ\underline{\theta}^{-1}$ as $\sigma$, where $\underline{\theta}$ is the permutation induced by $\theta$.}.
Note that the combination $\sigma\circ\theta$, which appears in the definition of the vertex in section \ref{sec.def}, remain invariant under this process.
This means that sometimes different $\sigma,\theta$ and $\sigma', \theta'$ corresponds to the same chamber. We can either fix $\sigma$ and change $\theta$ to parametrize chambers, or change both $\sigma$ and $\theta$ for convenience. The latter parametrization is redundant, but sometimes useful.

\bigskip

In the above discussion, $\theta$ appears somewhat artificially, but $\theta$ is often used in the mathematical literature.
The reason is that the maps $\theta$ makes a group, which is the Weyl group of $\hat{A}_{L-1}$.
As is well-known, the Weyl group of $A_{L-1}$ is the $L$-th symmetric group, which is a set of isomorphism $\{1/2,\ldots, L-1/2 \}\to \{1/2,\ldots, L-1/2 \}$.
The map $\theta$ gives a generalization to the affine case.
The affine Lie algebra $\hat{A}_{L-1}$ appears in the formula for the BPS partition function \cite{Nagao1,AOVY}, and as we have seen specifies a chamber structure. This is reminiscent of the appearance of the Weyl group of the Borcherds-Kac-Moody algebra in $\mathcal{N}=4$ wall crossing \cite{ChengV,CD}. It would be interesting to explore this point further.

\bigskip
Finally, let us illustrate this parametrization with examples. Consider the resolved conifold ($L=2$) with the resolution $\sigma(1/2)=+, \sigma(3/2)=-$.
Due to the condition \eqref{thetasum}, we can write\begin{NB}I change the notation\end{NB}%
\beq
\theta(1/2)=1/2-N,\quad \theta(3/2)=3/2+N.
\label{theta1}
\eeq
and this integer $N$ parametrize the chambers. This integer $N$ is the same integer $N$ appearing in section \eqref{thetaN}.
When $N\ge 0$, condition \eqref{thetaincrease} is satisfied and we are in one resolution $\sigma$. When $N<0$, \eqref{thetaincrease} is not satisfied and by a flop transition we are in a different crepant resolution specified by $\sigma'(1/2)=-, \sigma'(3/2)=+$.
If we define $\theta'$ by\begin{NB}I change the notation\end{NB}%
$$
\theta'(1/2)=1/2+N, \quad \theta'(3/2)=3/2-N,
$$
then $\sigma,\theta$ for $N>0$ and $\sigma',\theta'$ for $N<0$ parametrize the same chamber. In this sense, we can either choose $\sigma$ to be fixed and change $\theta$, or change both $\sigma$ and $\theta$, although the latter is a redundant parametrization.
The chamber corresponding to topological string theory for one resolution $\sigma$ is given by $N\to\infty$ in \eqref{theta1}, and another resolution $\sigma'$ given by $N\to -\infty$.

As a next example suppose $L=3$.
When \begin{NB}Modified\end{NB}%
$$
\theta(1/2)=-5/2, \quad \theta(3/2)=3/2, \quad \theta(5/2)=11/2,
\label{theta3-1}
$$
we have
$$
\left[B_{\theta}(\alpha_1)\right]=\left[B_{\theta}(\alpha_2)\right]=1, \quad \left[B_{\theta}(\alpha_1+\alpha_2)\right]=2.
$$
This is also given by\begin{NB}Modified\end{NB}%
$$
\theta(1/2)=11/2, \quad \theta(3/2)=3/2, \quad \theta(5/2)=-5/2.
\label{theta3-2}
$$
However, they parametrize different chambers in general\footnote{Sometimes they give the same chamber. This happens, for example for $\bC^2/\bZ_2\times \bC$, where there is a unique choice of crepant resolution.}. This is because the two $\theta$'s are related by a permutation $\Sigma: \{1/2.3/2,5/2\}\mapsto \{5/2,3/2,1/2 \}$, and correspondingly we have to change the choice of crepant resolution $\sigma$ by $\sigma\circ \Sigma$ as mentioned around \eqref{permdef}.

More generally, if we have \begin{NB}Modified\end{NB}%
$$
\theta\left(\frac{1}{2}\right)= \frac{1}{2}-2 L_1- L_2, \quad
\theta\left(\frac{3}{2}\right)= \frac{3}{2}+ L_1-L_2, \quad
\theta\left(\frac{5}{2}\right)= \frac{5}{2} +L_1 +2L_2.
$$
then
$$
\left[B_{\theta}(\alpha_1)\right]=L_1, \quad \left[B_{\theta}(\alpha_2)\right]=L_2, \quad \left[B_{\theta}(\alpha_1+\alpha_2)\right]=L_1+L_2,
$$
\begin{NB}Added\end{NB}%
and if we have
$$
\theta\left(\frac{1}{2}\right)= -\frac{1}{2}-2 L_1- L_2, \quad
\theta\left(\frac{3}{2}\right)= \frac{3}{2}+ L_1-L_2, \quad
\theta\left(\frac{5}{2}\right)= \frac{7
}{2} +L_1 +2L_2.
$$
then
$$
\left[B_{\theta}(\alpha_1)\right]=L_1, \quad \left[B_{\theta}(\alpha_2)\right]=L_2, \quad \left[B_{\theta}(\alpha_1+\alpha_2)\right]=L_1+L_2+1.
$$


\section{Young Diagrams and Maya Diagrams} \label{app.core}

A Young diagram $\lambda$ is a set of non-increasing positive integers 
$\lambda=(\lambda_1,\lambda_2,\ldots )$. As is well-known, this is equivalently 
represented by a Maya diagram, namely a map
$$
\lambda: \bZ_h\to \{\pm 1\}
$$
such that $\lambda(h)=\pm 1$ for $\pm h \gg  1$.
We sometimes represent $\lambda$ by
$$
\lambda=\cdots \quad \lambda(-5/2) \quad \lambda(-3/2) \quad \lambda(-1/2) \quad \Big| \quad \lambda(1/2) \quad \lambda(3/2) \quad \lambda(5/2)\quad \cdots,
$$
where the symbol $\Big|$ represents the position of the origin.
For notational simplicity, we use the same symbol $\lambda$ for a Maya diagram.
Our convention is shown in Figure \ref{Mayaconv}.

\begin{figure}[htbp]
\centering{\includegraphics[scale=0.45]{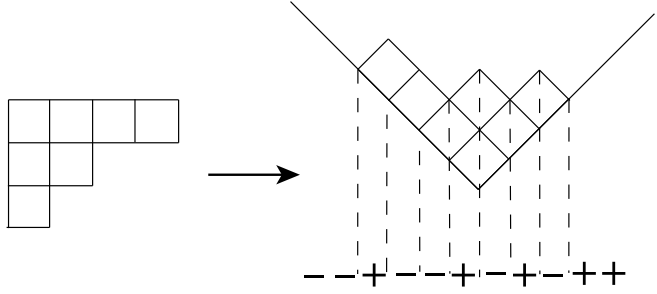}}
\caption{The convention of the Maya diagram in this paper.}
\label{Mayaconv}
\end{figure}

For example, 
\begin{eqnarray*}
{\rm \tiny \yng(2,1,1)}&=\cdots----+-\Big|++-+++ \cdots,\\
{\rm \tiny \yng(3,1)}&=\cdots ---+--\Big|+-++++ \cdots. \label{yng31}
\end{eqnarray*}

For a Young diagram and a positive integer $M$, define the quotients 
$\lambda_i (i=1,\ldots,M)$ by
\beq
\lambda_i (h)=\lambda\left(i-1/2+(h-1/2)M\right) \quad \textrm{for}\quad h\in \bZ_h.
\eeq

As an example, let us consider $\lambda={\tiny \yng(3,1)}$ shown in \eqref{yng31}. If you take $M=2$, then $M=2$-quotients are given by
\begin{align*}
\lambda_1&=\cdots------\Big| ++++++\cdots, \\
\lambda_2&=\cdots----+-\Big| -+++++\cdots.
\end{align*}

Suppose that $M$ is chosen such that the representation $\lambda_i$ is trivial for all $i=1,\ldots, M$.
This means that $\lambda_i$ 
can be written as 
\beq
\lambda_i (h)=\emptyset( h+ M N(i)).
\eeq
The integers $N(j)$ are called $M$-cores of $\lambda$, and satisfies
$$
\sum_{i=1}^M N(i)=0.
$$
For example, if we take $M=3$ for \eqref{yng31}, 
\begin{align*}
\lambda_1&=\cdots-----+\Big| ++++++\cdots, \\
\lambda_2&=\cdots------\Big| -+++++\cdots, \\
\lambda_3&=\cdots------\Big| ++++++\cdots,
\end{align*}
and 
\begin{align*}
N(1)=-1, \quad N(2)=1, \quad N(3)=0.
\end{align*}

\section{Proof of \eqref{main2}}\label{app.proof}

In this Appendix we give a proof of \eqref{main2}.
First, the following is clear from the definition:
\begin{prop}\label{prop.moregeneral}
Let $\sigma, \theta$ to be maps specifying the chamber for a Calabi-Yau manifold $X$.
Choose an integer $M$ and $\sigma', \theta'$ for $X'=X/\bZ_M$
such that the following condition holds:
\begin{equation*}
\sigma\circ\theta=\sigma'\circ\theta',\quad \mu\circ\theta=\mu'\circ\theta'.
\end{equation*}
Then we have
\[
\mathcal{C}_{(\sigma',\theta'\sss;\sss\mu',\nu')}(q',Q')|_{q^{\theta}_i=q'{}^{\theta'}_i=q'{}^{\theta'}_{i+L}=\cdots q'{}^{\theta'}_{i+(M-1)L}}
=\mathcal{C}_{\type}(q,Q).
\label{propassumption}
\]
\end{prop}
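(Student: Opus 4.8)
The plan is to show that the two sides of the claimed identity are generating functions over \emph{the same} set of crystal configurations, weighted compatibly, so that the equality reduces to a term-by-term matching of monomials. The observation I would isolate first is that all of the combinatorial data entering $\mathcal{C}_{\type}$ — the ground-state crystal $A(\Vmin)$ together with the melting rule (the partial order $\sqsupset$) that cuts out the admissible finite subsets $\Omega$ — depend on the type $\type$ \emph{only} through the composite maps $\sigma\circ\theta$ and $\mu\circ\theta$ and through the asymptotic data $\nu$. Indeed, the defining inequalities for a transition $\V$ of type $\type$ involve precisely $\V(h\mp\mu\circ\theta(h)/2)$ together with the relation $\overset{\sigma\circ\theta(h)}{\succ}$, and the explicit covering relations generating $\sqsupset$ listed in section \ref{sec.def} are phrased entirely through $\sigma\circ\theta$ and $\mu\circ\theta$; the boundary values at $n\ll 0$ and $n\gg 0$ are fixed by $\nu$. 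None of these data see $\sigma$, $\theta$, $\mu$ separately, only the stated composites.

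Granting this structural fact, I would argue as follows. Under the hypotheses $\sigma\circ\theta=\sigma'\circ\theta'$ and $\mu\circ\theta=\mu'\circ\theta'$, together with the matching of asymptotic data $\nu=\nu'$ (automatic in the application of interest, where $\nu=\nu'=\emptyset$), the notion of a transition of type $\type$ coincides verbatim with that of type $(\sigma',\theta'\sss;\sss\mu',\nu')$. Hence the minimal transitions $\Vmin$ agree, the ambient crystals $A(\Vmin)$ agree, and the two melting rules carve out literally the same collection of admissible finite subsets $\Omega$. This furnishes the identity bijection between the index sets of the two sums, so that only the assigned weights remain to be compared.

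For the weights I would use that the monomial attached to a configuration factorizes over its atoms: in $\mathcal{C}_{\type}$ an atom $a(n,x,y)$ contributes the variable $q^{\theta}_{n\bmod L}$, whereas in $\mathcal{C}_{(\sigma',\theta'\sss;\sss\mu',\nu')}$ the \emph{same} atom contributes $q'{}^{\theta'}_{n\bmod ML}$. Writing $n\equiv i\pmod L$ with $0\le i\le L-1$, the residue $n\bmod ML$ necessarily lies in $\{i,i+L,\ldots,i+(M-1)L\}$, and the prescribed specialization $q^{\theta}_i=q'{}^{\theta'}_i=q'{}^{\theta'}_{i+L}=\cdots=q'{}^{\theta'}_{i+(M-1)L}$ forces the two contributions to coincide no matter which of these residues occurs. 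Consequently each configuration carries an identical monomial on both sides, and summing over the common index set yields the asserted equality.

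The step I expect to demand the most care is the first one: establishing cleanly that $\Vmin$ and the melting order are genuinely functions of $\sigma\circ\theta$, $\mu\circ\theta$ and $\nu$ alone. This is exactly the point at which one must verify that passing to the orbifold $X'=X/\bZ_M$ only \emph{refines} the periodicity of the color labeling from $L$ to $ML$ while leaving the atom positions, the shape of $A(\Vmin)$, and the covering relations of $\sqsupset$ untouched — which is precisely the content encoded by the compatibility conditions $\sigma\circ\theta=\sigma'\circ\theta'$ and $\mu\circ\theta=\mu'\circ\theta'$. Once this independence is in place, the remaining bijection-and-weight bookkeeping is routine, consistent with the paper's assertion that the statement is clear from the definition.
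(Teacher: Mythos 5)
Your proof is correct and follows essentially the same route as the paper, whose own argument is just the terse observation that both sides are generated by the same crystal with the same melting rule, and that the specialization of variables makes the orbifold's finer ($ML$-periodic) coloring assign the same weight to each atom. Your write-up merely makes explicit what the paper leaves implicit — that $A(\Vmin)$ and the order $\sqsupset$ depend only on $\sigma\circ\theta$, $\mu\circ\theta$ and $\nu$ (with $\nu=\nu'$ needed, as you correctly flag) — so there is nothing to add.
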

This is because the both sides of the equation are defined by the same crystal with the same weights. The crystal for an orbifold is the same as the original crystal, the only difference being the difference of the weights; the crystal for the orbifold has more colors (variables). 
 However, in the above equation we are specializing the variables so the weights are also the same.
\begin{lem}\label{lem.choose}
For any $\sigma, \theta, \lambda$, we can take $M, \sigma', \theta'$ such that
\[
\sigma\circ\theta=\sigma'\circ\theta',\quad \mu\circ\theta=\emptyset\circ\theta'.
\]
\end{lem}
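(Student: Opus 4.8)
The plan is to read the two required identities as conditions on the single Maya diagram $\nu:=\mu\circ\theta$. The second identity $\mu\circ\theta=\emptyset\circ\theta'$ says exactly that $\nu$ is the sign pattern of $\theta'$, i.e. $\nu(h)=+1$ precisely when $\theta'(h)>0$ and $\nu(h)=-1$ precisely when $\theta'(h)<0$; equivalently, the Young diagram encoded by $\nu$ is the $ML$-core attached to the affine permutation $\theta'$. So I would first establish that $\nu$ is a genuine finite Young diagram of charge $0$. Since $\theta$ is a level-$L$ affine permutation, $\theta(h)-h$ is bounded and $\theta$ is balanced by \eqref{thetasum}; hence $\emptyset\circ\theta$ differs from $\emptyset$ only on a finite set and has charge $0$. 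The combined diagram $\mu$ of \eqref{combinedmu} differs from $\emptyset$ on a finite set $D$ and has charge $0$, so the composite $\nu$ differs from $\emptyset\circ\theta$ only on the finite set $\theta^{-1}(D)$; since the balanced bijection $\theta$ preserves charge, $\nu$ is a finite Young diagram of charge $0$.

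Next I would choose $M$. As $\nu$ has finitely many boxes it has a largest hook length $\ell$, and I would take any $M$ with $ML>\ell$, so that $\nu$ has no hook of length divisible by $ML$ and is therefore an $ML$-core. By the core/quotient correspondence of Appendix \ref{app.core} (with the $M$ there replaced by $ML$), an $ML$-core is exactly a Maya diagram whose restriction to each residue class modulo $ML$ is a shifted ground state; writing $d_r$ for the shift on the class of $r\in\{1/2,\dots,ML-1/2\}$, the vanishing $M$-core condition $\sum_i N(i)=0$ becomes $\sum_r d_r=0$. I would then set $\theta'(r):=r-d_r\,ML$ on one period and extend by $\theta'(h+ML)=\theta'(h)+ML$. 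This $\theta'$ is a bijection with all residues modulo $ML$ distinct, it is balanced because $\sum_r d_r=0$ gives $\sum_r\theta'(r)=\sum_r r$ (so \eqref{thetasum} holds at level $ML$), and a direct check of the threshold $\theta'(r)\in(-d_r\,ML,\,(1-d_r)ML)$ shows $\mathrm{sign}(\theta'(h))=\nu(h)$, that is $\emptyset\circ\theta'=\mu\circ\theta$.

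Finally I would recover $\sigma'$ from the first identity. Since $\theta'$ is a level-$ML$ bijection, I define $\sigma'$ on all of $\bZ_h$ by $\sigma'(\theta'(h)):=\sigma(\theta(h))$; this is well defined and $ML$-periodic because $\theta'^{-1}$, $\theta$ and $\sigma$ are all compatible with the shift by $ML$ (using $\theta(h+L)=\theta(h)+L$ and the $L$-periodicity of $\sigma$), whence $\sigma'\circ\theta'=\sigma\circ\theta$. A short count, using that $\theta$ is a bijection modulo $L$ so that $\sigma\circ\theta$ takes the value $+$ exactly $L_+$ times per period, shows $\sigma'$ takes the value $+$ exactly $ML_+$ times and $-$ exactly $ML_-$ times per period; thus $\sigma'$ is a legitimate resolution datum for the orbifold $X'=X/\bZ_M$, whose trapezoid has edges of length $ML_+$ and $ML_-$. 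Feeding the resulting $M,\sigma',\theta'$ into Proposition \ref{prop.moregeneral} then yields \eqref{main2}.

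The main obstacle is the middle step: matching the representation-theoretic statement ``$\emptyset\circ\theta'=\nu$ for a \emph{balanced} level-$ML$ affine permutation'' with the combinatorial statement ``$\nu$ is an $ML$-core'', and in particular verifying that the balance constraint \eqref{thetasum} corresponds exactly to the vanishing $\sum_r d_r=0$ of the core charges. The other point needing care is the charge-$0$ finiteness of $\nu$, which is what guarantees that a suitable $M$ exists at all; once these are secured, the explicit construction of $\theta'$ and the recovery of $\sigma'$ are routine.
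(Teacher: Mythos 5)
Your proof is correct and follows essentially the same route as the paper's: choose $M$ so that $\mu\circ\theta$ becomes an $ML$-core (your hook-length bound $ML>\ell$ is exactly the paper's condition $ML>h_+-h_-$), define $\theta'(h)=h-d_h\,ML$ from the core shifts, and recover $\sigma'$ as $\sigma\circ\theta\circ\theta'^{-1}$. The only difference is that you spell out details the paper leaves implicit, namely that $\mu\circ\theta$ is a finite charge-zero Maya diagram, that $\sum_r d_r=0$ yields the balance condition \eqref{thetasum} for $\theta'$, and that $\sigma'$ is $ML$-periodic with signature $(ML_+,ML_-)$.
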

\begin{proof}
We choose an integer $M$ such that all the $M$-quotients\footnote{See Appendix \ref{app.core} for the definition of $M$-quotients.} of $\mu_i$'s become trivial, i.e. $ML$-quotients of the combined representation $\mu$ (see \eqref{combinedmu}) is trivial.
For example, this is satisfied if we define
\[
h_-:=\mathrm{min}\{h\in\Zh \mid \mu\circ \theta(h)=+\},\quad 
h_+:=\mathrm{max}\{h\in\Zh \mid \mu\circ \theta(h)=-\},
\]
and take $M$ so that $ML>h_+-h_-$\footnote{The choice of $M$ is not unique. 
But the final result is independent of the choice of $M$. For practical computation it is useful to take the minimum $M$.}.
This means that for any half-integer $1/2\leq h\leq ML-1/2$ we can take $N(h)\in \Z$ such that 
\begin{align*}
\mu\circ \theta(h+NML)&=
\begin{cases}
- & (N<N(h)),\\
+ & (N\geq N(h))
\end{cases} \\
&=\emptyset \left(h+(N-N(h))ML\right).
\end{align*}
In other words, $N(j)$ is the $ML$-core\footnote{See Appendix \ref{app.core} for the definition of $M$-quotients.} of $\mu$.

Therefore the second condition of \eqref{propassumption} holds 
if we define $\theta'\colon\Zh\to\Zh$ by 
\[
\theta'(h)=h-N(h)ML\]
\begin{NB}removed: 
or
\[
\theta'^{-1}(h)=h+N(h)ML
\]
\end{NB}%
for $h\in \bZ_h$\footnote{For practical computations, it is useful to further perform a permutation to $\theta$ such that \eqref{thetaincrease} holds. See Appendix \ref{app.Weyl}.}.
It is clear that the first condition of \eqref{propassumption} determines $\sigma'$ uniquely.
\end{proof}

Our theorem follows from Proposition \ref{prop.moregeneral} and Lemma \ref{lem.choose}.
\begin{thm}
For $\sigma, \theta, \lambda$, take $M, \sigma', \theta'$ as above. 
Then we have
\[
\mathcal{C}_{(\sigma,\theta\sss;\sss \emptyset,\mu)}(q,Q)=
\mathcal{C}_{(\sigma',\theta'\sss;\sss \emptyset, \emptyset)}(q',Q')|_{q^{\theta}_i=q'{}^{\theta'}_i=q'{}^{\theta'}_{i+L}=\cdots q'{}^{\theta'}_{i+(M-1)L}}.
\]
\end{thm}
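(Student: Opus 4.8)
The plan is to read off the identity as the specialization of Proposition \ref{prop.moregeneral} to trivial orbifold data, supplying it with the datum constructed in Lemma \ref{lem.choose}; the theorem itself adds no new computation beyond checking that the hypotheses and conclusions of those two results line up once one sets $\mu'=\emptyset$ and $\nu'=\emptyset$ on the orbifold side.

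First I would apply Lemma \ref{lem.choose} to produce an integer $M$ and maps $\sigma',\theta'$ for $X'=X/\bZ_M$ with
\[
\sigma\circ\theta=\sigma'\circ\theta',\qquad \mu\circ\theta=\emptyset\circ\theta'.
\]
Concretely one takes $M$ large enough that all $M$-quotients of the combined diagram $\mu$ are trivial and sets $\theta'(h)=h-N(h)ML$, with $N(h)$ the $ML$-core of $\mu$; the first equation then determines $\sigma'$ uniquely. These are exactly the two hypotheses of Proposition \ref{prop.moregeneral}, specialized to $\mu'=\emptyset$.

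Next I would invoke Proposition \ref{prop.moregeneral} with this choice and with $\mu'=\nu'=\emptyset$. Its proof observes that when the composites $\sigma\circ\theta$ and $\mu\circ\theta$ agree with $\sigma'\circ\theta'$ and $\mu'\circ\theta'$, the two sides are built from literally the same combinatorial object: in the defining inequality $\V(h-\mu\circ\theta(h)/2)\overset{\sigma\circ\theta(h)}{\succ}\V(h+\mu\circ\theta(h)/2)$ of a transition of type $\type$, both the shift $\mu\circ\theta(h)$ and the succession sign $\sigma\circ\theta(h)$ coincide for the two sides, so the minimal transition $\Vmin$, the ground state $A(\Vmin)$, and the melting rule \eqref{meltingrule} are identical and the admissible finite subsets $\Omega$ are the same. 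The only discrepancy is that the orbifold $X'$ records a finer coloring (mod $ML$) than $X$ (mod $L$); the prescribed identification $q^{\theta}_i=q'{}^{\theta'}_i=q'{}^{\theta'}_{i+L}=\cdots=q'{}^{\theta'}_{i+(M-1)L}$ collapses the finer coloring onto the coarser one, so the two weighted sums agree monomial by monomial and the stated equality \eqref{main2} drops out.

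The step I expect to require the most care is the matching of asymptotic data: on $X$ a non-trivial diagram must be reproduced on $X'$ purely through the shifted chamber map $\theta'$, with trivial asymptotics $\nu'=\emptyset$ and trivial bulk representation $\mu'=\emptyset$. This is precisely the content of $\mu\circ\theta=\emptyset\circ\theta'$, and it rests on the classical $M$-core/$M$-quotient decomposition of the Maya diagram of $\mu$ set up in Appendix \ref{app.core}. Once one is satisfied that passing to the $\bZ_M$-orbifold leaves the underlying crystal untouched and merely refines the weights, the theorem is immediate.
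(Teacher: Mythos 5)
Your proposal is correct and follows exactly the paper's own route: the theorem is obtained by combining Lemma \ref{lem.choose} (constructing $M,\sigma',\theta'$ with $\sigma\circ\theta=\sigma'\circ\theta'$ and $\mu\circ\theta=\emptyset\circ\theta'$ via the $ML$-core/quotient decomposition) with Proposition \ref{prop.moregeneral} specialized to $\mu'=\nu'=\emptyset$. Your elaboration of why the proposition holds --- identical transitions, hence identical ground-state crystal and admissible subsets $\Omega$, with the variable specialization collapsing the finer orbifold coloring --- matches the paper's justification.
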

\begin{NB2}added\end{NB2}
It is straightforward to generalize this theorem to the case of refined BPS invariants discussed in section \ref{subsec.refined}.

\bibliographystyle{JHEP}
\bibliography{dthesis}

\end{document}